\newtheorem{prop}{Proposition}
\begin{document}

\title{\huge Low-Complexity Leakage-based Secure Precise Wireless Transmission with Hybrid Beamforming
}

\author{Tong Shen,~~Yan Lin, \emph{Member,~IEEE},~~Jun Zou,~~Yongpeng Wu, \emph{Senior Member, IEEE},~~Feng Shu, \emph{Member,~IEEE},~~Jiangzhou Wang, \emph{ Fellow, IEEE}

\thanks{Tong Shen,~Yan Lin,~Jun Zou,~and Feng Shu are with the School of Electronic and Optical Engineering, Nanjing University of Science and Technology, 210094, CHINA. (Email: shentong0107@163.com, yanlin@njust.edu.cn, shufeng0101@163.com). }
\thanks{Y. Wu is with the Department of Electronic Engineering, Shanghai Jiao Tong University, Shanghai 200240, China (E-mail: yongpeng.wu@sjtu.edu.cn).}
\thanks{Jiangzhou Wang is with the School of Engineering and Digital Arts, University of Kent, Canterbury CT2 7NT, U.K. Email: j.z.wang@kent.ac.uk.}}

\maketitle

\begin{abstract}
 In conventional secure precise wireless transmission (SPWT), fully digital beamforming (FDB) has a high secrecy performance in transmit antenna system, but results in a huge RF-chain circuit budget for medium-scale and large-scale systems. To reduce the complexity, this letter considers a hybrid digital and analog (HDA) structure with random frequency mapped into the RF-chains to achieve SPWT. Then, a hybrid SPWT scheme based on maximizing signal-to-leakage-and-noise ratio (SLNR) and artificial-noise-to-leakage-and-noise ratio (ANLNR) (M-SLNR-ANLNR) is proposed. Compared to the FDB scheme, the proposed scheme reduces the circuit budget with low computational complexity and comparable secrecy performance.
\end{abstract}
\begin{IEEEkeywords}
Secure precise wireless transmission, hybrid digital and analog, secrecy rate, leakage, low complexity.
\end{IEEEkeywords}
\section{Introduction}
Secure precise wireless transmission (SPWT) is a promising physical layer security technology \cite{Wang2012Distributed,Zhao2016Anti,ChenX2017,Zou2016Relay} which transmits confidential signal precisely to a determined direction and distance, while {blue}makes confidential signal seriously distorted out of the specified range. Based upon the technique of direction modulation (DM) \cite{Wu2019,Daly2009Directional,Yan2016Artificial}, SPWT can only guarantee the security in a specific direction. As a further advance, the authors of \cite{antonik2009investigation} and \cite{sammartino2013frequency} proposed a linear frequency diverse array (LFDA) method to achieve SPWT. However, the direction angle and distance achieved by LFDA may be coupled, which means that there may exist multiple directions and distances receiving the same confidential message as the desired users received. To address this problem, the authors of \cite{liu2016range ,Hu2017SPWT,shen2019} proposed a random frequency diverse array (RFDA) scheme, in which the frequency is randomly allocated on each transmit antenna and the correlation direction and distance are decoupled.

Nevertheless, all the above SPWT schemes are based on full digital beamforming (FDB) structure, and as the number of antennas tends to be in a medium or large scale, the computational complexity of digital beamforming is significantly increased. Furthermore, the circuit cost of the FDB structure is too expensive for commercial applications. Thus, a hybrid digital and analog (HDA) beamforming structure is emerging which is less expensive and has low complexity but with comparable performance. In \cite{zhang2005variable}, Zhang \textit{et al.} firstly proposed a hybrid precoding algorithm to make a balance between circuit complexity and system performance. Shu \textit{et al.} of \cite{Qin2018} proposed a low-complexity hybrid structure beamforming in massive MIMO receive array.

Against the above researches, this paper focuses on the design of HDA beamforming. Our main contributions are summarized as follows:
\begin{enumerate}
 \item  We propose a novel hybrid maximizing signal-to-leakage-and-noise ratio (SLNR) and artificial-noise-to-leakage-and-noise ratio (ANLNR) (M-SLNR-ANLNR) scheme by digital beamforming (DB) vector, analog beamforming (AB) vector and artificial-noise (AN) vector which achieves the suboptimal secrecy rate (SR). Meanwhile, this algorithm has addressed the modulo-one constrained problem in analog beamforming.

 \item  The convergency and complexity have been analysed and compared to the conventional full digital scheme.Our proposed scheme can achieve secrecy performance close to that of the conventional FDB schemes, and secrecy performance better than that of the hybrid scheme based on equal-amplitude (EA) beamforming.
\end{enumerate}


The remainder of this letter is organized as follows. In Section II, we propose a hybrid SPWT system model. Then, a low complexity hybrid M-SLNR-ANLNR scheme is presented in Section III. The performance of the proposed method is evaluated in Section IV. Finally, the conclusions are drawn in Section V.

$Notations$: throughout the paper, matrices, vectors, and scalars are denoted by letters of bold upper case, bold lower case, and lower case, respectively. Signs $(\cdot)^T$, $(\cdot)^H$, and $(\cdot)^{-1}$ denote matrix transpose, conjugate transpose, and Moore-Penrose inverse, respectively. The symbol $\mathbf{I}_K$ denotes the $K\times K$ identity matrix.
\section{System Model}

\begin{figure}[t]
\centering
\includegraphics[width=0.40\textwidth]{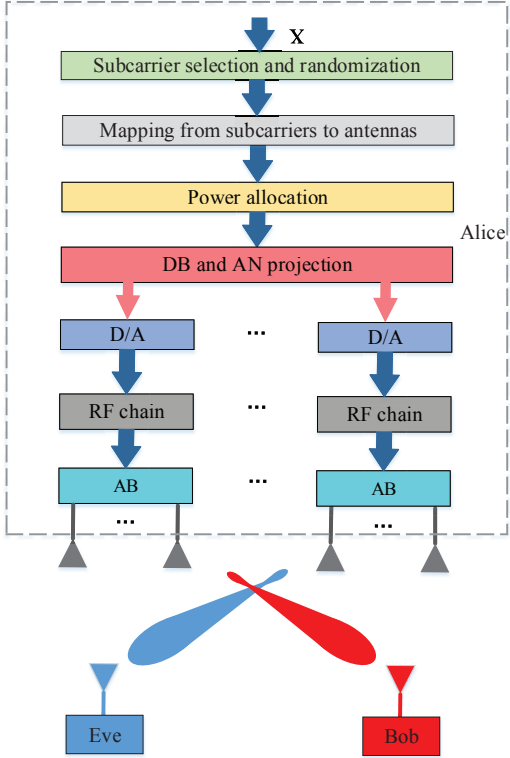}\\
\caption{A SPWT system with HDA beamforming.}\label{system_model.eps}
\end{figure}

 Considering a SPWT system which consists of a uniformly-spaced linear array (ULA) transmitter Alice, a legitimate user Bob and an eavesdropper Eve. It is assumed that Alice has $N$ antenna elements which is divided into $K$ sub-arrays, and each sub-array is composed of $M$ antenna elements, i.e., $N=KM$. Both Bob and Eve are equipped with a single antenna and located on $(\theta_B,R_B)$ and $(\theta_E,R_E)$, respectively. In our assumption, their positions are known by Alice. Fig. \ref{system_model.eps} provides an example of SPWT system with HDA beamforming.

With DB, the baseband precoded  $\mathbf{s}_{d}$ is given by
\begin{align}\label{sd}
\mathbf{s}_{d}=\sqrt {\alpha P} \mathbf{v}_dx+\sqrt {(1-\alpha)P}\mathbf{w},
\end{align}
where
\begin{align}\label{vd}
\mathbf{v}_d=[v_{d_1}, v_{d_2}, \cdots, v_{d_K}]^T
\end{align}
is the DB vector subjecting to $\mathbf{v}_d^H\mathbf{v}_d=1$, and $\mathbf{w}$ denotes the AN beamforming vector with the constraint $\mathrm{tr}\{\mathbf{w}^H\mathbf{w}\}=1$.
Moreover, $x$ is a complex digital modulation symbol with $\mathbb{E}\{x^*x\}=1$, $P$ denotes the total transmit power and $0\leq\alpha\leq1$ is the power allocation factor.
Then, after AB operation, the transmit signal vector becomes
\begin{align}\label{sa}
\mathbf{s}_{a}=V_A\mathbf{s}_{d}
=\sqrt {\alpha P} \mathbf{V}_A\mathbf{v}_dx+\sqrt {(1-\alpha)P}\mathbf{V}_A\mathbf{w},
\end{align}
where $\mathbf{V}_A$ denotes the block diagonal AB matrix which can be expressed as
\begin{align}\label{VA}
\mathbf{V}_A=\left[ {\begin{array}{*{20}{c}}
{{\mathbf{v}_{{a_1}}}}&0& \cdots &0\\
0&{{\mathbf{v}_{{a_2}}}}& \cdots &0\\
 \vdots & \vdots & \ddots & \vdots \\
0&0& \cdots &{{\mathbf{v}_{{a_K}}}}
\end{array}} \right].
\end{align}
Herein,
\begin{align}\label{vk}
\mathbf{v}_{{a_k}}=\frac{1}{\sqrt{M}}[e^{j\varphi_{k,1}},e^{j\varphi_{k,2}},\cdots,e^{j\varphi_{k,M}}]^T,
\end{align}
is the AB vector of $k$-th sub-array, where ${\varphi_{k,m}}, (m=1,2,\cdots,M)$ is the phase shifting.

For a  ULA transmitter Alice having $N$ elements, $M$ transmit antennas of each sub-array are combined with the same RF chain. Thus with the central frequency $f_c$ and the sub-carrier \cite{Zhu2009Chunk,Zhu2012Chunk} bandwidth $\Delta f$, the frequency allocated to the $k$-th sub-array is given by
\begin{align}\label{fn}
f_k=f_c+\eta_k\Delta f,~~~~k=1,2,\cdots,K,
\end{align}
where $\eta_k$ is the sub-carrier index of the $k$-th sub-array. Suppose that there are totally $N_f$ sub-carriers in our system, and $\eta_k\in S_{\eta}$, $\left(\eta_i\neq \eta_j, j\neq i\right)$, where $S_{\eta}=\{1,2,\cdots,N_f\}$ is the sub-carrier index set.
Moreover, we assume that the total bandwidth $B=N_f\Delta f\ll f_c$, and $\eta_k$ is randomly selected from $S_{\eta}$. We set the first element of the array as the reference antenna.

According to \cite{sammartino2013frequency} and \cite{Shu2018SPWT}, the normalized steering vector for a specific location $(\theta,R)$ relative to the ULA is given by
\begin{align}\label{h}
\mathbf{h}(\theta,R)=\frac{1}{\sqrt{N}}&[e^{j\Psi_{1,1}},\cdots,e^{j\Psi_{k,m}},\cdots,e^{j\Psi_{K,M}}]^T,
\end{align}
where
\begin{align}\label{Phi}
&\Psi_{k,m}=\frac{2\pi}{c}(-b_{k,m} f_c d \cos\theta+\eta_{k} \Delta f R),
\end{align}
denotes the phase shift of the $k$-th sub-array and $m$-th element relative to the reference antenna,  $d$ denotes the element spacing of the ULA at Alice, and $b_{k,m}=\left(k-1\right)M+m-1$.

Then, for simplicity, we define $\mathbf{h}_B$ and $\mathbf{h}_E$ as the abbreviations of $\mathbf{h}(\theta_B,R_B)$ and $\mathbf{h}(\theta_E,R_E)$, respectively. Considering a communication system with line of sight (LoS) channel, and far-field communication, the received signal at Bob can be expressed as
\begin{align}\label{yb}
y_B
&=\sqrt {\alpha P}\mathbf{h}^H_B \mathbf{V}_A\mathbf{v}_dx+\sqrt {(1-\alpha)P}\mathbf{h}^H_B\mathbf{V}_A\mathbf{w}+n_B,
\end{align}
where $n_B$ denotes the received additive white Gaussian noise (AWGN) at Bob with the distribution $\mathcal{CN}(0,\sigma^2)$.
Similarly, the received signal at Eve is given by
\begin{align}\label{ye}
y_E&=\sqrt {\alpha P}\mathbf{h}^H_E \mathbf{V}_A\mathbf{v}_dx+\sqrt {(1-\alpha)P}\mathbf{h}^H_E\mathbf{V}_A\mathbf{w}+n_E,
\end{align}
where $n_E$ denotes the received AWGN at Eve with the distribution $\mathcal{CN}(0,\sigma^2)$.
Hence, SR can be represented as
\begin{align}\label{SR}
\mathrm{SR}&=\log_{2}(1+ \frac{{\alpha P}|\mathbf{h}^H_B \mathbf{V}_A\mathbf{v}_d|^2}{{(1-\alpha)P}|\mathbf{h}^H_B\mathbf{V}_A\mathbf{w}|+n^2_B})\nonumber\\
&~~~-\log_{2}(1+ \frac{{\alpha P}|\mathbf{h}^H_E \mathbf{V}_A\mathbf{v}_d|^2}{{(1-\alpha)P}|\mathbf{h}^H_E\mathbf{V}_A\mathbf{w}|+n^2_E}).
\end{align}


\section{The Proposed Low-Complexity Leakage-based Hybrid Beamforming}
In order to reduce the computational complexity of SPWT scheme, we propose a M-SLNR-ANLNR hybrid beamforming scheme. Due to the fact that the AB vector $V_A$ is highly coupled with $\mathbf{v}_d$ and $\mathbf{w}$ in the SR expression, we choose to optimize the SLNR and ANLNR. This is owing to the fact that the problem of maximizing SR can be approximated as maximizing the product of SLNR and ANLNR in the medium and large SNR region. In the following, we firstly give the suboptimal beamforming vector by maximizing SLNR and then obtain the AN vector by maximizing ANLNR, respectively.
\subsection{Proposed Scheme}
In terms of HDA beamforming, DB vector $\mathbf{v}_d$, the AB matrix $\mathbf{V}_A$ and AN beamforming vector $\mathbf{w}$ are our optimization variables. To reduce the computational complexity, we let
\begin{align}\label{Vaij}
\mathbf{v}_{a_i}=\mathbf{v}_{a_j}=\mathbf{v}_a,(i,j=1,2,\cdots,K),
\end{align}
and
\begin{align}\label{va}
\mathbf{v}_{{a}}=\frac{1}{\sqrt{M}}[e^{j\varphi_{1}},e^{j\varphi_{2}},\cdots,e^{j\varphi_{M}}]^T.
\end{align}
Then, for the desired user, the SLNR can be expressed as
\begin{align}\label{SLNRB}
\mathrm{SLNR}=\frac{\alpha P |\mathbf{h}_B^H \mathbf{V}_A \mathbf{v}_d|^2}{\alpha P |\mathbf{h}_E^H \mathbf{V}_A \mathbf{v}_d|^2+\sigma^2},
\end{align}
and hence the optimization problem is written as
\begin{align}\label{max}
&\mathop {{\rm{maximize}}}\limits_{\mathbf{V}_A,\mathbf{v}_d}  ~\mathrm{SLNR}_B(\mathbf{V}_A,\mathbf{v}_d),\nonumber\\
&\mathrm{s.t.}~{\mathbf{v}_d^H}\mathbf{v}_d =1 .
\end{align}


It can be clearly seen from (\ref{max}) that it is difficult to solve $\mathbf{V}_A$ and $\mathbf{v}_d$ simultaneously because they are coupled in (\ref{SLNRB}) and each element in $V_A$ has the same amplitude. In the following, we will first solve $\mathbf{V}_A$ with fixed $\mathbf{v}_d$, then $\mathbf{v}_d$ with solved $\mathbf{V}_A$, last $\mathbf{w}$ with solved $\mathbf{V}_A$ and $\mathbf{v}_d$.

\textit{$\mathbf{1)}$ Solving for $\mathbf{V}_A$: With any given initial value of $\mathbf{v}_d[0]$ which satisfies $\mathbf{v}^H_d[0]\mathbf{v}_d[0]=1$, the optimization problem (\ref{max}) can be rewritten as
\begin{align}\label{max1}
&\mathop {{\rm{maximize}}}\limits_{\varphi_{1},\varphi_{2},\cdots,\varphi_{M}}  ~\mathrm{SLNR}_B(\varphi_{1},\varphi_{2},\cdots,\varphi_{M}).
\end{align}
\begin{prop}\label{prop1}
Given an initial $\mathbf{v}_a[0]$, and for each $\varphi_{m}$,  the optimal $\varphi_{m}$ can be obtained by gradient descent method when other $\varphi_{i,(i\neq m)}$ are fixed.
\begin{align}\label{phim}
\varphi_m=\arcsin \frac{2R_{E,m}I_{B,m}-2R_{B,m}I_{E,m}}{\sqrt{y_1^2+y_2^2}}-\varphi_y,
\end{align}
where the denotations of $R_{E,m}$, $I_{B,m}$, $R_{B,m}$, $I_{E,m}$, $y_1$, and $y_2$ are provided in Appendix A.
\end{prop}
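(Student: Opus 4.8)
The plan is to exploit the separable structure created by the simplification $\mathbf{v}_{a_i}=\mathbf{v}_a$ in (\ref{Vaij}), which turns the single-variable problem (\ref{max1}) into a tractable trigonometric one. First I would expand the two bilinear forms in (\ref{SLNRB}). Since $\mathbf{V}_A$ is block diagonal with every block equal to $\mathbf{v}_a=\frac{1}{\sqrt M}[e^{j\varphi_1},\dots,e^{j\varphi_M}]^T$, the vector $\mathbf{V}_A\mathbf{v}_d$ has $(k,m)$-th entry $\frac{1}{\sqrt M}e^{j\varphi_m}v_{d_k}$, so that $\mathbf{h}_B^H\mathbf{V}_A\mathbf{v}_d=\sum_{m=1}^{M}e^{j\varphi_m}g_{B,m}$ and $\mathbf{h}_E^H\mathbf{V}_A\mathbf{v}_d=\sum_{m=1}^{M}e^{j\varphi_m}g_{E,m}$ for complex coefficients $g_{B,m},g_{E,m}$ that absorb the channel phases and $\mathbf{v}_d$. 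Fixing all $\varphi_i$ with $i\neq m$, I would isolate the $m$-th term and write the two inner products as $e^{j\varphi_m}g_{B,m}+A_B$ and $e^{j\varphi_m}g_{E,m}+A_E$, where $A_B,A_E$ are the (now constant) partial sums.

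Next I would substitute these into (\ref{SLNRB}) and expand the squared magnitudes. Writing $R_{B,m}=\mathrm{Re}(A_B^{*}g_{B,m})$, $I_{B,m}=\mathrm{Im}(A_B^{*}g_{B,m})$ and likewise $R_{E,m},I_{E,m}$ for the eavesdropper, the numerator $N(\varphi_m)$ and denominator $D(\varphi_m)$ each take the form of a constant plus the single sinusoid $2R\cos\varphi_m-2I\sin\varphi_m$; this is precisely where the quantities tabulated in Appendix A enter. The SLNR is thus a ratio of two functions of the type (constant $+$ sinusoid), and a stationary point is characterized by $N'(\varphi_m)D(\varphi_m)-N(\varphi_m)D'(\varphi_m)=0$.

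The crux of the argument, and the step I expect to demand the most care, is showing that this stationarity condition collapses from an apparently second-degree trigonometric expression to a first-order one. When $N'D-ND'$ is expanded, the $\sin\varphi_m\cos\varphi_m$ cross terms cancel identically (both products contribute the same coefficient), while the $\sin^2\varphi_m$ and $\cos^2\varphi_m$ contributions combine through $\sin^2+\cos^2=1$ into the constant $2R_{E,m}I_{B,m}-2R_{B,m}I_{E,m}$. What remains is the linear trigonometric equation $y_1\sin\varphi_m+y_2\cos\varphi_m=2R_{E,m}I_{B,m}-2R_{B,m}I_{E,m}$, with $y_1,y_2$ assembled from the constant parts of $N$ and $D$ exactly as recorded in Appendix A. Keeping the bookkeeping transparent enough that these cancellations are manifest is the main obstacle.

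Finally I would recast the left-hand side in amplitude-phase form, $y_1\sin\varphi_m+y_2\cos\varphi_m=\sqrt{y_1^2+y_2^2}\,\sin(\varphi_m+\varphi_y)$ with a suitable phase $\varphi_y$, invert the sine, and arrive at $\varphi_m=\arcsin\big((2R_{E,m}I_{B,m}-2R_{B,m}I_{E,m})/\sqrt{y_1^2+y_2^2}\big)-\varphi_y$, which is (\ref{phim}). One loose end remains: $\arcsin$ admits two roots on $[0,2\pi)$, so I would verify by a second-derivative sign (or direct comparison of SLNR values) that the quoted branch is the maximizer rather than a minimizer. Iterating this closed-form update over $m$, inside the outer loop initialized at $\mathbf{v}_a[0]$, then realizes the gradient-based coordinate procedure asserted in the proposition.
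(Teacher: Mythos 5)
Your proposal follows essentially the same route as the paper's Appendix A: isolate the $m$-th phase to write each inner product as $e^{j\varphi_m}A_{B_m}+B_{B_m}$, expand the squared magnitudes into constant-plus-sinusoid form, set the quotient-rule derivative to zero to get the linear trigonometric equation $y_1\sin\varphi_m+y_2\cos\varphi_m=2R_{E,m}I_{B,m}-2R_{B,m}I_{E,m}$, and invert via the amplitude-phase identity to obtain (\ref{phim}). Your explicit verification that the $\sin\varphi_m\cos\varphi_m$ cross terms cancel and your flagged check that the chosen $\arcsin$ branch is a maximizer (which the paper silently omits) are welcome extra care, but they do not change the argument.
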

\begin{proof}\label{proof}
The proof is shown in Appendix A.
\end{proof}
Based upon Proposition 1, by a number of iterations, we can obtain the optimal solution of $\mathbf{v}_a^*$. }

\textit{$\mathbf{2)}$ Solving for $\mathbf{v}_d$}: Next, we solve the optimal value of $\mathbf{v}_d$. As $\mathbf{v}_a^*$ is obtained, by substituting $\mathbf{v}_a^*$ into (\ref{SLNRB}), the expression of SLNR can be rewritten as
\begin{align}\label{SLNRB1}
{\rm{SLNR}} = \frac{{{\bf{v}}_d^H[\alpha P{\bf{V}}_A^H{{\bf{h}}_B}{\bf{h}}_B^H{{\bf{V}}_A}]{{\bf{v}}_d}}}{{{\bf{v}}_d^H[ \alpha P{\bf{V}}_A^H{{\bf{h}}_E}{\bf{h}}_E^H{{\bf{V}}_A} + {\sigma ^2}{\mathbf{I}_K}]{{\bf{v}}_d}}}.
\end{align}
According to the generalized Rayleigh-Ritz theorem \cite{Horn1985Matrix}, the optimal $\mathbf{v}_d$ is the normalized eigenvector corresponding to the largest eigenvalue of
\begin{align}\label{eig}
{[ \alpha P{\bf{V}}_A^H{{\bf{h}}_E}{\bf{h}}_E^H{{\bf{V}}_A} + {\sigma ^2}{\mathbf{I}_K}]^{ - 1}}(\alpha P{\bf{V}}_A^H{{\bf{h}}_B}{\bf{h}}_B^H{{\bf{V}}_A}).
\end{align}

\begin{algorithm}[t]
\caption{Algorithm for solving problem (\ref{max})}
\label{alg:A}
\begin{algorithmic}[1]
\STATE {Initialize $\mathbf{v}_d[0]$ and $\mathbf{v}_a[0]$ randomly that is feasible to (\ref{va}), and ${\mathbf{v}_d[0]}^H\mathbf{v}_d[0]=1$, respectively;}
\STATE {Set $n=0$;}
\REPEAT
\STATE Set $m=0$;
\REPEAT
\STATE {Substituting $\mathbf{v}_d[n]$ and $\mathbf{v}_a[m]$ into (\ref{max1}) yields $\mathbf{v}_a[m+1]$ by gradient descent method;}
\STATE {Updating $m=m+1$;}
\UNTIL{Convergence}
\STATE Substituting $\mathbf{v}_a[m]$ into (\ref{eig}) yields the optimal solution $\mathbf{v}_d[n+1]$;
\STATE Updating $n=n+1$;
\UNTIL{Convergence}
\STATE {The final optimal solution $\mathbf{v}_a^*=\mathbf{v}_a[m]$ and $\mathbf{v}_d^*=\mathbf{v}_d[n]$.}
\STATE {Obtaining $\mathbf{w}$ according to (\ref{eigw}).}
\end{algorithmic}
\end{algorithm}


\textit{$\mathbf{3)}$ Solving for $\mathbf{w}$}: Finally, we design the AN vector by maximizing the ANLNR, which is defined as
\begin{align}\label{ANLNR}
{\rm{ANLNR}} =& \frac{(1-\alpha) P |\mathbf{h}_E^H \mathbf{V}_A \mathbf{w}|^2}{(1-\alpha) P |\mathbf{h}_B^H \mathbf{V}_A \mathbf{w}|^2+\sigma^2},\nonumber\\
=&\frac{{{\bf{w}}^H[(1-\alpha) P{\bf{V}}_A^H{{\bf{h}}_E}{\bf{h}}_E^H{{\bf{V}}_A}]{{\bf{w}}}}}{{{\bf{w}}^H[(1- \alpha) P{\bf{V}}_A^H{{\bf{h}}_B}{\bf{h}}_B^H{{\bf{V}}_A} + {\sigma ^2}{\mathbf{I}_K}]{{\bf{w}}}}}.
\end{align}
Then, the optimal $\mathbf{w}$ is the normalized eigenvector corresponding to the largest eigenvalue of
\begin{align}\label{eigw}
{[ (1-\alpha) P{\bf{V}}_A^H{{\bf{h}}_B}{\bf{h}}_B^H{{\bf{V}}_A} + {\sigma ^2}{I_K}]^{ - 1}}((1-\alpha) P{\bf{V}}_A^H{{\bf{h}}_E}{\bf{h}}_E^H{{\bf{V}}_A}).
\end{align}
The complete iterative procedure of solving problem (\ref{max}) is summarized in Algorithm \ref{alg:A}.

\subsection{Convergence and Complexity Analysis}
\textit{Convergence:} From Step 5 to Step 8 in Algorithm 1, we can obtain $\mathbf{v}_a[m+1]$ with Appendix A, thus we have $\mathrm{SLNR}(\mathbf{v}_a[m+1],v_d[n])\geq \mathrm{SLNR}(\mathbf{v}_a[m],v_d[n])$. Then, we obtain $v_d[n+1]$ with (\ref{eig}), and we have $\mathrm{SLNR}(\mathbf{v}_a[m+1],v_d[n+1])\geq \mathrm{SLNR}(\mathbf{v}_a[m+1],v_d[n])$. As a result, the value of SLNR increases after each iteration, due to $\mathrm{SLNR}\leq \frac{\alpha P}{\sigma^2}$. Therefore, Algorithm 1 converges.

\textit{Complexity Analysis:} According to (\ref{eig}) and Appendix A, the computational complexity of DB is $\mathcal{O}(3K^3+2K(2N+K))$ in terms of floating-point operations (FLOPs), while the computation complexity of AB is $\mathcal{O}(K^2+KN)$ FLOPs. Additionally, as shown in (\ref{eigw}), computing AN vector $\mathbf{w}$ requires $\mathcal{O}(3K^3+2K(2N+K))$ FLOPs. Thus, the total computational complexity of this hybrid beamforming scheme is $\mathcal{O}(T*(3K^3+3K^2+5KN)+3K^3+2K(2N+K))$ FLOPs, where $T$ is the iteration times.

By comparison, for the conventional FDB scheme \cite{Shu2018SPWT}, the computational complexity of DB $\mathbf{v}_d$ is $\mathcal{O}(2N^3+2N^2)$ FLOPs. In addition, the computational complexity for AN vector $\mathbf{w}$ is $\mathcal{O}(N^3+2N^2)$ FLOPs, then we finally have the total computational complexity of the full DB scheme as $\mathcal{O}(3N^3+4N^2)$ FLOPs. It is noteworthy that the complexity of conventional FDB scheme is generally higher than that of proposed scheme, namely $N>K$. Consequently, our proposed hybrid beamforming scheme has a lower complexity than the FDB scheme.
\section{simulation results and analysis}
To evaluate the performance of our methods, the parameters and specifications in our simulation are used as follows. By default, we set $f_c=3$GHz, $B=20$MHz, $N_f=1024$, $d=c/2f_c$, $K=32$, and $M=4$. We assume $\sigma^2_B=\sigma^2_E$, and the location of Bob and of Eve are set in $(45^\circ,600m)$, and $(120^\circ,300m)$, respectively.
\begin{figure}[t]
\centering
\includegraphics[width=0.48\textwidth]{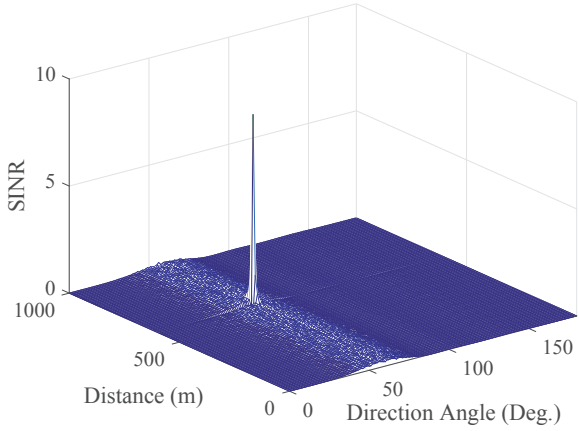}\\
\caption{3-D surface of SINR versus direction angle and distance}\label{SINR}
\end{figure}

\begin{figure}[t]
\centering
\includegraphics[width=0.48\textwidth]{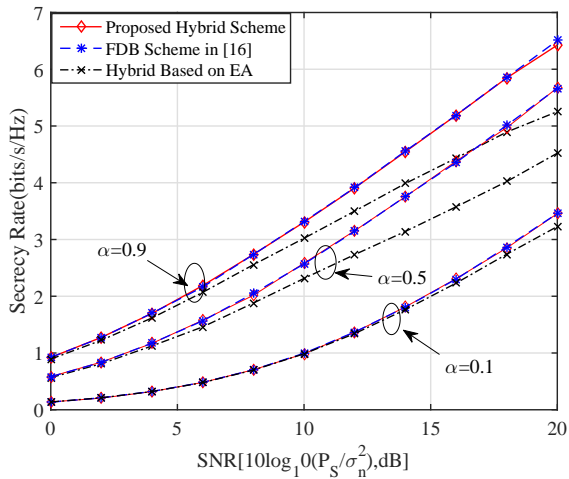}\\
\caption{Secrecy rate versus SNR with Bob and Eve located in different direction angle.}\label{SR}
\end{figure}

Fig. \ref{SINR} plots the 3-D surface of SINR versus the direction angle and the distance. It can be seen that that only a main energy peak appears at the position of Bob, which indicates our proposed scheme can achieve SPWT. Moreover, the size of the main peak depends on both the number of antennas $N$ and the total bandwidth $B$, and it becomes larger as $N$ and $B$ increase.

Fig. \ref{SR} compares the secrecy rate of the proposed hybrid M-SLNR-ANLNR scheme, the FDB scheme in \cite{Shu2018SPWT} and the hybrid scheme with equal-amplitude (EA) beamforming versus the SNR. The EA scheme designs beamforming vector by maximizing SINR at Bob as $\mathbf{h}^H_B\mathbf{V}_a\mathbf{v}_d=1$ and $\mathbf{h}^H_B\mathbf{V}_a\mathbf{w}=0$, thus $\mathrm{SINR}_B=\frac{1}{\sigma^2}$, but this scheme does not consider the SINR at Eve which may also be large. The simulation results show that our proposed hybrid scheme performs almost the same as the FDB scheme. And there exists a clear gap between proposed hybrid scheme and hybrid EA scheme, which tends to be large when $\alpha=0.5$ and $\alpha=0.9$.


\begin{figure}[t]
\centering
\includegraphics[width=0.45\textwidth]{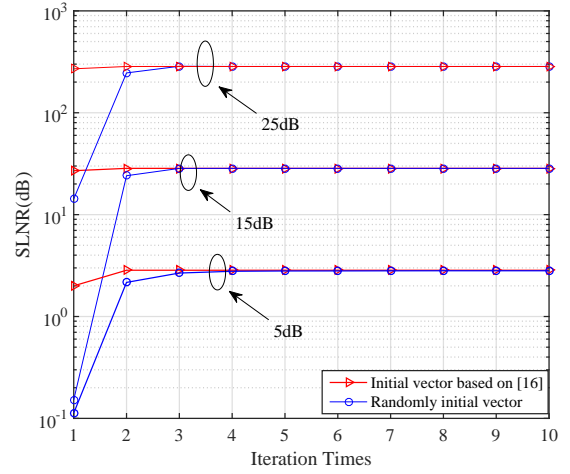}\\
\caption{SLNR versus iteration times}\label{IT}
\end{figure}

\begin{figure}[t]
\centering
\includegraphics[width=0.48\textwidth]{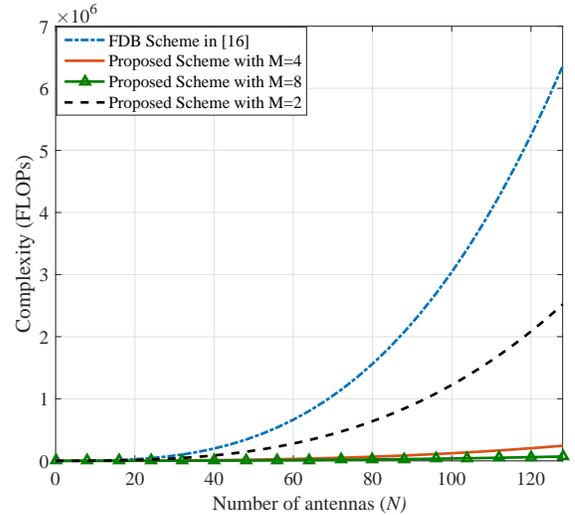}\\
\caption{Complexity versus the total number of antennas.}\label{complexity}
\end{figure}
Fig. \ref{IT} illustrates the SLNR versus iteration times. It shows that, the SLNR of our proposed scheme converges within three iterations, and with the initial vector obtained in \cite{Shu2018SPWT}, it converges within only two iterations. Fig. \ref{complexity} illustrates the curve of the complexity versus the total number of antennas. From this figure, it is seen that the complexity of our proposed scheme grows much slower than that of the conventional FDB scheme, as the total number of antennas increases, in particular with in a medium or large scale. Notably, the complexity of the proposed scheme increases as $M$ decreases, since $K$ increases with $N$, and the complexity of the proposed scheme depends on $K$ and $N$.

In summary, the proposed hybrid scheme can achieve SPWT with low complexity. Therefore, although the FDB scheme has a better performance than the proposed hybrid scheme due to the AB vector having a equal-modular constraint, our proposed hybrid scheme has the close performance to the FDB scheme, as well as has a significantly superior performance to the EA hybrid scheme.
%
%
%

\section{Conclusion}
In this letter, we proposed a M-SLNR-ANLNR SPWT scheme based on hybrid beamforming in order to reduce the circuit budget and the complexity in the medium-scale and large-scale SPWT systems. To solve the formulation problem, we presented a pair of beamforming and AN vectors by maximizing SLNR and ANLNR, respectively. The simulation results have shown that our proposed scheme has almost the same secrecy performance as and much lower complexity than the conventional FDB scheme. Finally, the proposed scheme is superior to the hybrid EA scheme in term of SR.

\begin{appendices}
\section{DERIVATION OF $\varphi_{m}$}
By substituting Eqs. (\ref{VA}), (\ref{Vaij}) and (\ref{va}) into $\mathbf{h}_B^H \mathbf{V}_A \mathbf{v}_d$ and $\mathbf{h}_E^H \mathbf{V}_A \mathbf{v}_d$, we have
\begin{align}\label{hbvavd}
\mathbf{h}_B^H \mathbf{V}_A \mathbf{v}_d&=\sqrt{\frac{1}{NM}}\sum\limits_{k = 1}^K {\sum\limits_{m = 1}^M {{e^{ - j{\Psi _{k,m}}}}{e^{j{\varphi _m}}}} {v_{{d_k}}}}\nonumber\\
 &= {e^{j{\varphi _m}}}{A_{{B_m}}} + {B_{{B_m}}},
\end{align}
where
\begin{align}\label{ABm}
&{A_{{B_m}}}=\sqrt{\frac{1}{NM}}{\sum\limits_{k = 1}^K {{e^{ - j{\Psi _{k,m}}}}{v_{{d_k}}}} },\\
&{B_{{B_m}}}=\sqrt{\frac{1}{NM}}\sum\limits_{i = 1,i \ne m}^M {{e^{j{\varphi _m}}}\sum\limits_{k = 1}^K {{e^{ - j{\Psi _{k,m}}}}{v_{{d_k}}}} } .
\end{align}
\end{appendices}
Then, we have
\begin{align}\label{hbvavd2}
|\mathbf{h}_B^H \mathbf{V}_A \mathbf{v}_d|^2&=({e^{j{\varphi _m}}}{A_{{B_m}}} + {B_{{B_m}}})^H({e^{j{\varphi _m}}}{A_{{B_m}}} + {B_{{B_m}}})\nonumber\\
&=A_{{B_m}}^HA_{{B_m}}+B_{{B_m}}^HB_{{B_m}}\nonumber\\
&~~~~+2(R_{B,m}\cos \varphi_m-I_{B,m}\sin \varphi_m),
\end{align}
with
\begin{align}\label{RBm1}
&R_{B,m}=Re(A_{{B_m}}B_{{B_m}}^H),
\end{align}
\begin{align}\label{RBm2}
&I_{B,m}=Im(A_{{B_m}}B_{{B_m}}^H).
\end{align}
where $Re(\cdot)$ and $Im(\cdot)$ denote the real part and the imaginary part, respectively.

Similarly, we can obtain $|\mathbf{h}_E^H \mathbf{V}_A \mathbf{v}_d|^2$, $A_{{E_m}}$, $B_{{E_m}}$, $R_{E,m}$ and $I_{E,m}$. Thus, the SLNR can be expressed as
\begin{align}\label{SLNR2}
\mathrm{SLNR}_B=\frac {X_{B,m}+2\alpha P R_{B,m} \cos \varphi_m - 2\alpha P I_{B,m}\sin \varphi_m} {X_{E,m}+2\alpha P R_{E,m} \cos \varphi_m - 2\alpha P I_{E,m}\sin \varphi_m}
\end{align}
where
\begin{align}\label{XB}
&X_{B,m}=\alpha P(A_{{B_m}}^HA_{{B_m}}+B_{{B_m}}^HB_{{B_m}}),\\
&X_{E,m}=\alpha P(A_{{E_m}}^HA_{{E_m}}+B_{{E_m}}^HB_{{E_m}})+\sigma^2.
\end{align}
Therefore, in order to obtain the optimal $\varphi_m$, we compute the derivation of $\mathrm{SLNR}_B$ with respect to $\varphi_m$,
\begin{align}\label{deri}
&\frac{\partial(\mathrm{SLNR}_B)}{\partial (\varphi_m)}=\nonumber\\
&\frac {2y_1\sin \varphi_m+2y_2\cos \varphi_m+4R_{B,m}I_{E,m}-4R_{E,m}I_{B,m}} {[X_E+2\alpha P R_{E,m} \cos \varphi_m - 2\alpha P I_{E,m}\sin \varphi_m]^2},
\end{align}
where
\begin{align}\label{y1}
&y_1=X_{B,m}R_{E,m}-X_{E,m}R_{B,m},
\end{align}
and
\begin{align}\label{y2}
&y_2=X_{B,m}I_{E,m}-X_{E,m}I_{B,m}.
\end{align}
Let $\frac{\partial(\mathrm{SLNR}_B)}{\partial (\varphi_m)}=0$, we obtain
\begin{align}\label{phim}
\varphi_m=\arcsin \frac{2R_{E,m}I_{B,m}-2R_{B,m}I_{E,m}}{\sqrt{y_1^2+y_2^2}}-\varphi_y,
\end{align}
where $\varphi_y$ satisfies $\sin \varphi_y=\frac{y_2}{\sqrt{y_1^2+y_2^2}}$ and $\cos \varphi_y=\frac{y_1}{\sqrt{y_1^2+y_2^2}}$.

\noindent This complete the derivation of $\varphi_m$.
\ifCLASSOPTIONcaptionsoff
  \newpage
\fi
\bibliographystyle{IEEEtran}

\bibliography{IEEEfull,reference}

\end{document}